 \definecolor{darkblue}{rgb}{0,0,.5}
\newcommand{\C}[1]{{\cal{#1}}}
\newcommand{\bb}[1]{\textbf{#1}}
\newcommand{\mf}[1]{{\mathfrak{#1}}}
\newcommand{\lr}[1]{{\left\langle {#1}\right\rangle}}
\newcommand{\rl}[0]{{\rangle\langle}}
\begin{document}

\title{Classical quantum stochastic processes}

\author{Philipp Strasberg}
\author{Mar\'ia Garc\'ia D\'iaz}
\affiliation{F\'isica Te\`orica: Informaci\'o i Fen\`omens Qu\`antics, Departament de F\'isica, Universitat Aut\`onoma de Barcelona, 08193 Bellaterra (Barcelona), Spain}

\date{\today}

\begin{abstract}
 We investigate the role of coherence and Markovianity in finding an answer to the question whether the outcomes of 
 a projectively measured quantum stochastic process are compatible with a classical stochastic process. 
 For this purpose we put forward an operationally motivated definition of incoherent dynamics applicable to any open 
 system's dynamics. For non-degenerate observables described by rank-1 projective measurements we show that 
 classicality always implies incoherent dynamics, whereas the converse is only true for invertible Markovian (but not 
 necessarily time-homogeneous) dynamics. For degenerate observables the picture is somewhat reversed as classicality 
 does no longer suffice to imply incoherent dynamics (even in the invertible Markovian case), while an incoherent, 
 invertible Markovian dynamics still implies classicality. 
\end{abstract}

\maketitle

\newtheorem{mydef}{Definition}[section]
\newtheorem{lemma}{Lemma}[section]
\newtheorem{thm}{Theorem}[section]
\newtheorem{crllr}{Corollary}[section]
\newtheorem*{thm*}{Theorem}
\theoremstyle{remark}
\newtheorem{rmrk}{Remark}[section]

\section{Introduction}

Although in actual experiments with classical systems it might not always be possible to measure the system without 
disturbing it, at least theoretically one can consider the ideal limit of a non-invasive measurement. 
This idea has led to the theory of stochastic processes, a major mathematical toolbox used across many scientific 
disciplines~\cite{VanKampenBook2007, GardinerBook2009}. Since the limit of an ideal non-disturbing measurement 
does not exist for quantum systems, a widely accepted consensus of what a quantum stochastic process 
actually \emph{is} has not yet emerged. However, recent progress (see Ref.~\cite{MilzEtAlArXiv2017} and references 
therein) strongly suggests that a quantum stochastic process is conceptually similar to classical causal 
modeling~\cite{PearlBook2009} and here we will follow this approach. Understanding under which circumstances a 
projectively measured quantum system can be effectively described in a classical way is therefore of fundamental 
interest as it sheds light on the gap between quantum and classical stochastic processes. In addition, it enables 
us to distinguish quantum from classical features which is a relevant task for future technologies (e.g., in quantum 
information or quantum thermoydnamics) and for the field of quantum biology. Finally, it also has practical relevance 
as classical stochastic processes are easier to simulate. 

The relation between classical and quantum stochastic processes was first addressed by Smirne and 
co-workers~\cite{SmirneEtAlQST2018}, who showed that the 
answer to the question whether a quantum system effectively behaves classical is closely related to the question 
whether coherences play a role in its evolution. More specifically, for a quantum dynamical semigroup obeying 
the regression theorem (i.e., a time-homogeneous quantum Markov process), it was shown that the statistics obtained 
from rank-1 projective measurements of a given system observable are compatible with a classical stochastic process 
if and only if the dynamics is ``non-coherence-generating-and-detecting (NCGD)''~\cite{SmirneEtAlQST2018}. 

The purpose of the present paper is to extend the results of Smirne \emph{et al.}~in various directions. 
We will provide an operationally motivated definition of \emph{incoherent dynamics}, which is supposed to capture 
the absence of any detectable coherence in the dynamics. It is applicable to any open systems dynamics and it is 
different from the NCGD notion. Our definition allows us to prove the following: 
first, for non-degenerate observables described by rank-1 projectors, any process which can be effectively described 
classically is incoherent (i.e., cannot generate any detectable coherence), whereas the converse is only true for 
invertible Markovian, but not necessarily time-homogeneous dynamics. Second, for degenerate observables, we lose the 
property that classicality implies incoherent dynamics because detectable coherence can be hidden in degenerate 
subspaces. 

The rest of the paper is structured as follows. In Sec.~\ref{sec mathematical preliminaries} we set the stage and 
introduce some basic definitions. Our main results are reported in Sec.~\ref{sec non degenerate} for non-degenerate 
observables and in Sec.~\ref{sec degenerate} for degenerate observables. We conclude in Sec.~\ref{sec conclusions}. 
A thorough comparison with the framework of Ref.~\cite{SmirneEtAlQST2018} is given in Appendix~\ref{sec app comparison} 
showing that our results reduce to the ones of Smirne \emph{et al.}~in the respective limit. Various counterexamples, 
which demonstrate that our main theorems in Sec.~\ref{sec results} are tight, are postponed to 
Appendix~\ref{sec app counterexamples}.

\section{Mathematical preliminaries}
\label{sec mathematical preliminaries}

We start by reviewing basic notions of a classical stochastic process. We label the classical distinguishable 
states of the system of interest by $r$ and we assume that the system gets measured at an arbitrary set of discrete 
times $\{t_1,\dots,t_n\}$. We denote the result at time $t_i$ by $r_i$. Furthermore, for reasons which will become 
clearer later on, we explicitly denote the initial preparation of the experiment by $\C A_0$. At 
this stage the reader can think of this as merely a verbal description of how to initialize the experiment 
(e.g., `wait long enough such that the system is equilibrated and start measuring afterwards'), later on it will 
mathematically turn out to be a completely positive and trace-preserving map. We then denote the 
joint probability distribution to get the sequence of measurement results $\bb r_n = r_1,\dots,r_n$ at times 
$t_1,\dots,t_n$ given the initial preparation $\C A_0$ by 
\begin{equation}\label{eq joint probability}
 p(r_n,t_n;\dots;r_1,t_1|\C A_0) \equiv p(\bb r_n|\C A_0).
\end{equation}
The following definition is standard: 

\begin{mydef}
 The probabilities $p(\bb r_n|\C A_0)$ are said to be classical with repect to a given preparation procedure 
 $\C A_0$ if they fulfill the consistency condition 
 \begin{equation}\label{eq def classicality}
  \sum_{r_k} p(r_\ell,\dots,r_k,\dots,r_j|\C A_0) = p(r_\ell,\dots,\cancel{r_k},\dots,r_j|\C A_0)
 \end{equation}
 for all $\ell\ge k\ge j\ge1$. Here, the probability on the right hand side is constructed by measuring the states 
 $r_i$ of the system only at the set of times $\{t_\ell,\dots,t_j\}\setminus\{t_k\}$. 
\end{mydef}

We remark that, if the consistency requirement~(\ref{eq def classicality}) is fulfilled, then -- by the 
Kolmogorov-Daniell extension theorem -- we know that there exists an underlying continuous-in-time 
stochastic process, which contains all joint proabilities~(\ref{eq joint probability}) as marginals. The importance 
of this theorem lies in the fact that it allows us to bridge experimental reality (where any measurement statistics 
is always finite) with its theoretical description (which often uses continuous-time dynamics in form of, e.g., 
stochastic differential equations). 

Albeit condition~(\ref{eq def classicality}) is in general not fulfilled for quantum dynamics, the joint probability 
distribution~(\ref{eq joint probability}) is nevertheless a well-defined object in quantum mechanics. For this purpose 
we assume that the experimentalist measures at time $t_k$ an arbitrary system observable $R_k = \sum_{r_k} r_k P_{r_k}$ 
with projectors $P_{r_k} = P_{r_k}^2$ and eigenvalues $r_k\in\mathbb{R}$. If all projectors are rank-1, i.e., 
$P_{r_k} = |r_k\rl r_k|$, we talk about 
a non-degenerate system observable, otherwise we call it degenerate. Furthermore, following the conventional picture of 
open quantum systems~\cite{BreuerPetruccioneBook2002}, we allow the system $S$ to be coupled to an arbitrary environment 
$E$. The initial system-environment state at time $t_0<t_1$ is denoted by $\rho_{SE}(t_0)$. Then, by using superoperator 
notation, we can express Eq.~(\ref{eq joint probability}) as 
\begin{equation}\label{eq probability}
 \begin{split}
  & p(\bb r_n|\C A_0)    \\
  &=  \mbox{tr}_{SE}\left\{\C P_{r_n}\C U_{n,n-1}\dots \C P_{r_2}\C U_{2,1}\C P_{r_1}\C U_{1,0}\C A_0\rho_{SE}(t_0)\right\} \\
  &\equiv \mbox{tr}_S\left\{\mf T_{n+1}[\C P_{r_n},\dots,\C P_{r_2},\C P_{r_1},\C A_0]\right\}.
 \end{split}
\end{equation}
Here, the preparation procedure $\C A_0$ is an arbitrary completely positive (CP) and trace-preserving map acting on the 
system only (we suppress identity operations in the tensor product notation). Notice that the preparation procedure 
could itself be the identity operation (i.e., `do nothing') denoted by $\C A_0 = \C I_0$. Furthermore, $\C U_{k,k-1}$ 
denotes the unitary time-evolution propagating the system-environment state from time $t_{k-1}$ to $t_k$ (we make no 
assumption about the underlying Hamiltonian here). We also introduced the projection superoperator 
$\C P_{r_k} \rho \equiv P_{r_k}\rho P_{r_k}$, which acts only on the system and corresponds to result $r_k$ at time 
$t_k$. Finally, in the last line of Eq.~(\ref{eq probability}) we have introduced the $(n+1)$-step `process tensor' 
$\mf T_{n+1}$~\cite{PollockEtAlPRA2018} (also called `quantum comb'~\cite{ChiribellaDArianoPerinottiPRL2008, 
ChiribellaDArianoPerinottiPRA2009} or `process matrix'~\cite{CostaShrapnelNJP2016, OreshkovGiarmatziNJP2016}). It is 
a formal but operationally well-defined object: it yields the (subnormalized) state of the system 
$\tilde\rho_S(\C P_{r_n},\dots,\C P_{r_2},\C P_{r_1},\C A_0) = \mf T_{n+1}[\C P_{r_n},\dots,\C P_{r_2},\C P_{r_1},\C A_0]$ 
conditioned on a certain sequence of interventions $\C P_{r_n},\dots,\C P_{r_2},\C P_{r_1},\C A_0$. Its norm, 
as given by the trace over $S$, equals the probability to obtain the measurement results $\bb r_n$. Recently, 
it was shown that the process tensor allows for a rigorous definition of quantum stochastic processes (or quantum causal 
models) fulfilling a generalized version of the Kolmogorov-Daniell extension theorem~\cite{MilzEtAlArXiv2017}. 
We also add that complete knowledge of the process tensor $\mf T_n$ implies complete knowledge of the process 
tensor $\mf T_\ell$ for $\ell\le n$, i.e., $\mf T_n$ contains $\mf T_\ell$. 

We now have the main tools at hand to precisely state the question we are posing in this paper: Which conditions 
does a quantum stochastic process need to fulfill in order to guarantee that the resulting measurement statistics can 
(or cannot) be explained by a classical stochastic process? That is, when is Eq.~(\ref{eq def classicality}) fulfilled 
or, in terms of the process tensor, when is 
\begin{equation}
 \begin{split}\label{eq classicality process tensor}
  & \mbox{tr}_S\{\mf T_{\ell+1}[\C P_{r_\ell},\dots,\Delta_k,\dots,\C P_{r_j},\dots,\C A_0]\} \\
  &\stackrel{?}{=} \mbox{tr}_S\{\mf T_{\ell+1}[\C P_{r_\ell},\dots,\C I_k,\dots,\C P_{r_j},\dots,\C A_0]\}.
 \end{split}
\end{equation}
Here, we have introduced the \emph{dephasing operation} at time $t_k$, $\Delta_k \equiv \sum_{r_k} \C P_{r_k}$,
which plays an essential role in the following. Furthermore, the dots in Eq.~(\ref{eq classicality process tensor}) 
denote either projective measurements (if the system gets measured at that time) or identity operations (if the 
system does not get measured at that time). 

To answer the question, we will need a suitable notion of an `incoherent' quantum stochastic process, defined as 
follows: 

\begin{mydef}
 For a given set of observables $\{R_k\}$, $k\in\{1,\dots,\ell\}$, we call the dynamics of an open quantum system 
 $\ell$-incoherent with respect to the preparation $\C A_0$ if all process tensors 
 \begin{equation}\label{eq incoherent}
  \mf T_{\ell+1}\left[\Delta_\ell,\left\{\begin{matrix} \Delta_{\ell-1} \\ \C I_{\ell-1} \\ \end{matrix}\right\},\dots, 
  \left\{\begin{matrix} \Delta_1 \\ \C I_1 \\ \end{matrix}\right\},\C A_0\right]
 \end{equation}
 are equal. Here, the angular bracket notation means that at each time step we can freely choose to perform either 
 a dephasing operation ($\Delta$) or nothing ($\C I$). If the dynamics are $\ell$-incoherent for all 
 $\ell\in\{1,\dots,n\}$, we simply call the dynamics incoherent with respect to the preparation procedure $\C A_0$. 
\end{mydef}

This definition is supposed to capture the situation where the experimentalist has no ability to detect the presence 
of coherence during the course of the evolution. For this purpose we imagine that the experimentalist can manipulate 
the system in two ways: first, she can prepare the initial system state in some way via $\C A_0$ (which could be 
only the identity operation) and she can projectively measure the system observables $R_k$ at times 
$t_k\in\{t_1,\dots,t_n\}$. The question is then: if the final state got dephased with respect to the observable $R_\ell$ 
(e.g., by performing a final measurement of $R_\ell$), is the experimentalist able to infer whether the system was 
subjected to additional dephasing operations at earlier times, i.e., can possible coherences at earlier times become 
manifest in different populations at the final time $t_\ell$? If that is not the case, the dynamics are called 
$\ell$-incoherent. We remark that a process that is $\ell$-incoherent is not necessarily $k$-incoherent for $k\neq\ell$. 
It is therefore important to specify at which (sub)set of times the process is incoherent. In the following we will be 
only interested in processes which are $\ell$-incoherent for all $\ell\in\{1,\dots,n\}$, henceforth dubbed simply 
`incoherent' (with respect to the preparation $\C A_0$). We repeat that our definition of incoherence is different 
from the NCGD notion introduced in Ref.~\cite{SmirneEtAlQST2018}, see Appendix~\ref{sec app comparison}. 
Furthermore, a similar idea restricted to two times was introduced in Ref.~\cite{GessnerBreuerPRL2011} in order 
to detect nonclassical system-environment correlations in the dynamics of open quantum systems. 

\section{Results}
\label{sec results}

\subsection{Non-degenerate observables}
\label{sec non degenerate}

Our first main result is the following: 

\begin{thm}\label{thm classical implies incoherent}
 If the measurement statistics are classical with respect to $\C A_0$, then the dynamics is incoherent with respect 
 to $\C A_0$. 
\end{thm}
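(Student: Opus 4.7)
The plan is to show that, under the classicality assumption, every output state in Eq.~(\ref{eq incoherent}) is identical regardless of the interior choices $\Delta_j$ vs.\ $\C I_j$. The non-degeneracy assumption (rank-1 projectors) does most of the heavy lifting: because the final slot is always $\Delta_\ell$, the output is diagonal in the $R_\ell$ eigenbasis, and comparing states reduces to comparing diagonal entries.

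Concretely, for a non-degenerate $R_\ell$ one has
\begin{equation*}
 \Delta_\ell\tilde\rho_S = \sum_{r_\ell}\langle r_\ell|\tilde\rho_S|r_\ell\rangle\,|r_\ell\rl r_\ell|,
\end{equation*}
so the output in Eq.~(\ref{eq incoherent}) is fully determined by the numbers $\mbox{tr}\{\C P_{r_\ell}\mf T_{\ell+1}[\C I_\ell,\star_{\ell-1},\dots,\star_1,\C A_0]\}$, where each $\star_j\in\{\Delta_j,\C I_j\}$. Next, I would expand every interior $\Delta_j$ as $\sum_{r_j}\C P_{r_j}$ and use linearity of $\mf T_{\ell+1}$. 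Writing $D\subseteq\{1,\dots,\ell-1\}$ for the set of dephased interior times, each diagonal entry becomes
\begin{equation*}
 \sum_{\{r_j\}_{j\in D}} p\!\left(r_\ell,\{r_j\}_{j\in D}\,\middle|\,\C A_0\right),
\end{equation*}
i.e., the joint probability for outcomes at the times in $D\cup\{\ell\}$, summed over the interior outcomes.

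Finally, I would iteratively apply the classicality condition~(\ref{eq def classicality}) to peel off one interior outcome at a time. Each step is a legal instance of Eq.~(\ref{eq def classicality}) because the removed index $k$ may sit anywhere in the interior of the measured sequence, leaving the surviving interior measurements in $D$ untouched. After $|D|$ reductions the sum collapses to $p(r_\ell|\C A_0)$, a number that is manifestly independent of $D$. Hence every diagonal entry, and therefore every output state in Eq.~(\ref{eq incoherent}), is the same, which is precisely $\ell$-incoherence; repeating the argument for all $\ell\in\{1,\dots,n\}$ yields incoherence.

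The delicate point is really the very first observation, that $\Delta_\ell\tilde\rho_S$ is diagonal in the $R_\ell$ eigenbasis: this is where non-degeneracy is essential, because for degenerate projectors the dephased state is only block-diagonal, so matrix elements within a degenerate eigenspace could still depend on the interior choices even when the diagonal (marginal) probabilities match. I therefore expect the implication to fail in the degenerate setting, consistent with the separate treatment promised in Sec.~\ref{sec degenerate}.
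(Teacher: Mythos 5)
Your proposal is correct and follows essentially the same route as the paper's own proof: both exploit the rank-1 structure of the final dephasing to reduce the comparison of output states in Eq.~(\ref{eq incoherent}) to a comparison of joint probabilities, expand each interior $\Delta_j=\sum_{r_j}\C P_{r_j}$ by multilinearity of $\mf T_{\ell+1}$, and then invoke the consistency condition~(\ref{eq def classicality}) iteratively to remove the interior outcomes. Your closing observation about where non-degeneracy enters matches the paper's own remark that this assumption is used precisely in the first step, Eq.~(\ref{eq step 1 theorem 1}).
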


Before we prove it, we remark that this theorem holds for any quantum stochastic process (especially without imposing 
Markovianity). Furthermore, a classical process for the times $\{t_n,\dots,t_1\}$ is also classical for all subsets of 
times and hence, the theorem implies incoherence, i.e., $\ell$-incoherence for all $\ell\in\{1,\dots,n\}$. 
In the following proof we will only display the case $\ell=n$, as the rest follows immediately. 

\begin{proof}
 We start by noting that 
 \begin{equation}\label{eq step 1 theorem 1}
  \mf T_{n+1}[\C P_{r_n},\dots,\C P_{r_1},\C A_0] = p(r_n,\dots,r_1|\C A_0) |r_n\rl r_n|,
 \end{equation}
 which is a general identity as we have not made any assumption about the joint probability $p(r_n,\dots,r_1|\C A_0)$. 
 Obviously, if we choose to perform nothing at any time $t_\ell<t_n$, we have 
 \begin{equation}
  \begin{split}
   & \mf T_{n+1}[\C P_{r_n},\dots,\C I_\ell,\dots,\C P_{r_1},\C A_0] \\
   &= p(r_n,\dots,\cancel{r_\ell},\dots,r_1|\C A_0)|r_n\rl r_n|.
  \end{split}
 \end{equation}
 But by assumption of classicality, this is equal to 
 \begin{equation}\label{eq help 1}
  \begin{split}
   & \mf T_{n+1}[\C P_{r_n},\dots,\C I_\ell,\dots,\C P_{r_1},\C A_0] \\
   &= \sum_{r_\ell} p(r_n,\dots,r_\ell,\dots,r_1|\C A_0)|r_n\rl r_n| \\
   &= \sum_{r_\ell} \mf T_{n+1}[\C P_{r_n},\dots,\C P_{r_\ell},\dots,\C P_{r_1},\C A_0]  \\
   &= \mf T_{n+1}[\C P_{r_n},\dots,\Delta_\ell,\dots,\C P_{r_1},\C A_0].
  \end{split}
 \end{equation}
 Hence, by summing Eq.~(\ref{eq help 1}) over the remaining $r_k\neq r_\ell$, we confirm 
 \begin{equation}
  \begin{split}
   & \mf T_{n+1}[\Delta_n,\dots,\C I_\ell,\dots,\Delta_1,\C A_0] \\
   &= \mf T_{n+1}[\Delta_n,\dots,\Delta_\ell,\dots,\Delta_1,\C A_0]
  \end{split}
 \end{equation}
 for arbitrary $t_\ell < t_n$ and where the dots denote dephasing operations at the remaining times. We can now pick 
 another arbitrary time $t_k\neq t_\ell$ and repeat essentially the same steps as above to arrive at the conclusion 
 \begin{equation}
  \begin{split}
   & \mf T_{n+1}[\Delta_n,\dots,\C I_\ell,\dots,\C I_k,\dots,\Delta_1,\C A_0] \\
   &= \mf T_{n+1}[\Delta_n,\dots,\Delta_\ell,\dots,\Delta_k,\dots,\Delta_1,\C A_0]
  \end{split}
 \end{equation}
 for any two times $t_k\neq t_\ell$. By repeating this argument further, we finally confirm that the dynamics are 
 incoherent. 
\end{proof}

The converse of Theorem~\ref{thm classical implies incoherent} holds only in a stricter sense. For this purpose we 
need the notion of Markovianity as defined in Ref.~\cite{PollockEtAlPRL2018}. In there, it was shown that the 
definition of a quantum Markov process implies the notion of \emph{operational CP divisibility}. This means that 
for an arbitrary set of independent interventions (CP maps) 
$\C A_{r_n},\dots,\C A_{r_0}$ the process tensor `factorizes' as 
\begin{equation}\label{eq Markov process tensor}
 \mf T_{n+1}[\C A_{r_n},\dots,\C A_{r_0}] = \C A_{r_n}\Lambda_{n,n-1}\dots\Lambda_{1,0}\C A_{r_0}\rho_S(t_0).
\end{equation}
Here, the set $\{\Lambda_{\ell,k}\}$ is a family of CP and trace-preserving maps fulfilling the composition law 
$\Lambda_{\ell,j} = \Lambda_{\ell,k}\Lambda_{k,j}$ for any $\ell>k>j$. We remark that a CP divisible process (which is 
commonly refered to as being `Markovian') is in general not operationally CP divisible (also see the recent 
discussion in Ref.~\cite{MilzEtAlArXiv2019}). In a nutshell, an operationally CP divisible process always fulfills the 
quantum regression theorem, but a CP divisible process does not (a counterexample is in fact shown in 
Appendix~\ref{sec app comparison}). 

Furthermore, to establish the converse of Theorem~\ref{thm classical implies incoherent} we also need the following 
definition: 

\begin{mydef}
 A Markov process $\{\Lambda_{\ell,k}\}$ is said to be invertible, if the inverse of any $\Lambda_{k,0}$ exists 
 for all $k$, i.e., the CP and trace-preserving maps $\Lambda_{\ell,k}$ are identical to 
 $\Lambda_{\ell,0}\Lambda_{k,0}^{-1}$.
\end{mydef}

We are now ready to prove the next main theorem: 

\begin{thm}\label{thm incoherent classical}
 If the dynamics are Markovian, invertible and incoherent for all preparations $\C A_0$, then the statistics are 
 classical for any preparation. 
\end{thm}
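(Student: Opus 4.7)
The plan is to convert the Markov factorization~(\ref{eq Markov process tensor}) of the process tensor into an ordinary composition of dynamical maps, extract from the incoherence hypothesis together with invertibility a superoperator identity that renders every intermediate $\Delta_k$ invisible to every later observable, and then exploit the rank-1 identity $\C P_{r_{k+1}}=\C P_{r_{k+1}}\Delta_{k+1}$ to collapse the two sides of the consistency condition~(\ref{eq def classicality}) onto each other.

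First I would rewrite~(\ref{eq def classicality}) for a single summed intermediate index $r_k$ with $j<k<\ell$. Markovianity turns the left-hand side into
\begin{equation*}
\mathrm{tr}\{\C P_{r_\ell}\Lambda_{\ell,\ell-1}\cdots\C P_{r_{k+1}}\Lambda_{k+1,k}\Delta_k\Lambda_{k,k-1}\C P_{r_{k-1}}\cdots\C A_0\rho_S(t_0)\},
\end{equation*}
while the right-hand side, after absorbing $\Lambda_{k+1,k-1}=\Lambda_{k+1,k}\Lambda_{k,k-1}$ via the composition law, becomes the same expression with $\Delta_k$ replaced by $\C I$. The problem thus reduces to showing $\C P_{r_{k+1}}\Lambda_{k+1,k}\Delta_k=\C P_{r_{k+1}}\Lambda_{k+1,k}$ as superoperators; iterating over every summed index then produces the full~(\ref{eq def classicality}).

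The key step is extracting the corresponding superoperator identity from incoherence. Under Markovianity the equality of $\mf T_{\ell+1}[\Delta_\ell,\C I_{\ell-1},\ldots,\Delta_k,\ldots,\C I_1,\C A_0]$ and $\mf T_{\ell+1}[\Delta_\ell,\C I_{\ell-1},\ldots,\C I_k,\ldots,\C I_1,\C A_0]$ collapses to $\Delta_\ell\Lambda_{\ell,k}\Delta_k\Lambda_{k,0}\C A_0\rho_S(t_0)=\Delta_\ell\Lambda_{\ell,k}\Lambda_{k,0}\C A_0\rho_S(t_0)$. Since incoherence holds for \emph{every} preparation, replace-and-prepare channels $\C A_0(\rho)=\mathrm{tr}(\rho)\sigma$ let $\C A_0\rho_S(t_0)$ exhaust all density matrices, whose complex span is the full matrix space. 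Invertibility of $\Lambda_{k,0}$ then upgrades the above pointwise equality to the superoperator identity
\begin{equation*}
\Delta_\ell\Lambda_{\ell,k}\Delta_k=\Delta_\ell\Lambda_{\ell,k},\qquad \ell>k.
\end{equation*}
Applied at $\ell=k+1$ together with $\C P_{r_{k+1}}=\C P_{r_{k+1}}\Delta_{k+1}$ (the rank-1 property of non-degenerate projectors), this yields $\C P_{r_{k+1}}\Lambda_{k+1,k}\Delta_k=\C P_{r_{k+1}}\Delta_{k+1}\Lambda_{k+1,k}\Delta_k=\C P_{r_{k+1}}\Delta_{k+1}\Lambda_{k+1,k}=\C P_{r_{k+1}}\Lambda_{k+1,k}$, exactly what the reduction required.

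The main obstacle I anticipate is the rigorous justification of this middle step: one must argue that sweeping $\C A_0$ over all CPTP preparations really produces every density matrix at the input of $\Lambda_{1,0}$, and that invertibility of $\Lambda_{k,0}$ combined with linearity promotes a pointwise equality on states to a superoperator identity on all matrices. Without invertibility the identity would only hold on the (possibly proper) image of $\Lambda_{k,0}$ and the chain of equalities at the $\C P_{r_{k+1}}$ junction would collapse, which is precisely why the theorem insists on invertible Markovian dynamics.
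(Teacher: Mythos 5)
Your proposal is correct and follows essentially the same route as the paper: extract the superoperator identity $\Delta_{\ell+1}\Lambda_{\ell+1,\ell}\Delta_\ell = \Delta_{\ell+1}\Lambda_{\ell+1,\ell}$ from incoherence over all preparations plus invertibility, sandwich it with $\C P_{r_{\ell+1}}$, and telescope through the Markov-factorized joint probability; you merely make the ``preparations span the full state space'' step more explicit. One small remark: the identity $\C P_{r_{k+1}} = \C P_{r_{k+1}}\Delta_{k+1}$ holds for any orthogonal projective decomposition, not only rank-1 projectors, which is why the paper notes that this theorem survives for degenerate observables.
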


\begin{proof}
 By using Eq.~(\ref{eq Markov process tensor}) and the property of incoherence, we can conclude that for any two 
 times $t_{\ell+1}, t_\ell \in\{t_1,\dots,t_n\}$ (with $t_{\ell+1} > t_\ell$) 
 \begin{equation}
  \Delta_{\ell+1} \Lambda_{\ell+1,\ell}\Delta_\ell \Lambda_{\ell,0}\C A_0\rho_S(t_0) 
  = \Delta_{\ell+1} \Lambda_{\ell+1,\ell} \Lambda_{\ell,0}\C A_0\rho_S(t_0).
 \end{equation}
 Since the dynamics are invertible and incoherent for all preparations $\C A_0$, this implies the superoperator identity 
 $\Delta_{\ell+1} \Lambda_{\ell+1,\ell}\Delta_\ell = \Delta_{\ell+1} \Lambda_{\ell+1,\ell}$. By multiplying this 
 equation with $\C P_{r_{\ell+1}}$, we arrive at 
 \begin{equation}\label{eq superoperator identity}
  \sum_{r_\ell} \C P_{r_{\ell+1}}\Lambda_{\ell+1,\ell}\C P_{r_\ell} = \C P_{r_{\ell+1}}\Lambda_{\ell+1,\ell}. 
 \end{equation}
 From this general identity we immediately obtain that 
 \begin{equation}
  \begin{split}
   & \sum_{r_\ell}p(\bb r_n) \\
   &= \mbox{tr}\left\{\C P_{r_n}\Lambda_{n,n-1}\dots\sum_{r_\ell}\C P_{r_{\ell+1}}\Lambda_{\ell+1,\ell}\C P_{r_\ell}
   \dots\C P_{r_1}\Lambda_{1,0}\C A_0\rho\right\}    \\
   &=   \mbox{tr}\{\C P_{r_n}\Lambda_{n,n-1}\dots\C P_{r_{\ell+1}}\Lambda_{\ell+1,\ell}
   \dots\C P_{r_1}\Lambda_{1,0}\C A_0\rho\}    \\
   &= p(r_n,\dots,\cancel{r_\ell},\dots,r_1).
  \end{split}
 \end{equation}
 This concludes the proof as the above argument also holds for all possible subsets of times. 
\end{proof}

We add that the counterexamples in Appendix~\ref{sec app counterexamples} demonstrate that 
Theorem~\ref{thm incoherent classical} is also tight in the sense that a process, which is incoherent only for a subset 
of preparations or which is not invertible, does not imply classical statistics. One remaining open 
question concerns the assumption of Markovianity. At the moment it is not clear whether relaxing this condition is 
meaningful as it requires to define the notion of invertibility for a non-Markovian process, which is not unambiguous. 

Furthermore, the superoperator identity~(\ref{eq superoperator identity}) implies that, if we write $\Lambda_{\ell,k}$ 
as a matrix in an ordered basis where populations precede coherences with respect to the measured observable $R_k$ 
(input) and $R_\ell$ (output), it has the form 
\begin{equation}
 \Lambda_{\ell,k} = \begin{pmatrix}
                    A_{\ell,k}  &   0   \\
                    C_{\ell,k}  &   D_{\ell,k}   \\
                 \end{pmatrix},
\end{equation}
where $A_{\ell,k}$ is a stochastic matrix and $C_{\ell,k}$ and $D_{\ell,k}$ are matrices, which are only 
constrained by the requirement of complete positivity. 

\subsection{Degenerate observables}
\label{sec degenerate}

If the measured observable contains degeneracies, the picture above somewhat reverses. First, 
Theorem~\ref{thm classical implies incoherent} ceases to hold even in the Markovian and invertible regime because 
the assumption of a non-degenerate observable already entered in the first step of its proof, see 
Eq.~(\ref{eq step 1 theorem 1}). Physically speaking, the reason is that it now becomes possible to hide coherences in 
degenerate subspaces and this can have a detectable effect on the output state~(\ref{eq incoherent}). This is 
demonstrated with the help of an example in Appendix~\ref{sec app counterexamples}. In contrast, 
Theorem~\ref{thm incoherent classical} still holds true for degenerate observables. In fact, in the proof of 
Theorem~\ref{thm incoherent classical} we never used that the measured observable is non-degenerate. 

\section{Conclusions}
\label{sec conclusions}

We have investigated whether the outcomes of a projectively measured quantum system can be described classically 
depending on the capability of an open quantum system to show detectable effects of coherence. The question whether 
the quantum stochastic process is (invertible) Markovian and whether the measured observables are degenerate (or not) 
had a crucial influence on the results. Together with the counterexamples in Appendix~\ref{sec app counterexamples} 
we believe that we have provided a fairly complete picture about the interplay between classicality, coherence 
and Markovianity. It remains, however, still open whether our definition of `incoherent dynamics' is the most 
meaningful one. One clear advantage of our proposal is that it is operationally and theoretically well-defined for 
arbitrary quantum processes. Therefore, it could help to extend existing resource theories, which crucially 
rely on the existence of dynamical maps~\cite{StreltsovAdessoPlenioRMP2017}, to arbitrary multi-time processes.

We further point out that our investigation is closely related to the study of Leggett-Garg inequalities and possible 
violations thereof~\cite{LeggettGargPRL1985, EmaryLambertNoriRPP2014}. In fact, the classicality 
assumption~(\ref{eq def classicality}) plays a crucial role in deriving any Leggett-Garg inequality. 
Therefore, we can conclude that all incoherent quantum systems, which evolve in an invertible Markovian way, will 
never violate a Leggett-Garg inequality if the measured observable is non-degenerate. Interestingly, incoherent 
quantum systems could potentially violate Leggett-Garg inequalities if the measured observable is degenerate. 

Another interesting open point of investigation concerns the question whether the property of incoherence implies 
a particular structure on the generator of a quantum master equation, which is still the primarily used tool in open 
quantum system theory. This question is indeed further pursued by one of us~\cite{Diaz}. 

\emph{Note added.} While this manuscript was under review, we became aware of the work of 
Milz \emph{et al.}~\cite{MilzEgloffEtAlArXiv2019} where an identical question is analysed from a related perspective.

\subsection*{Acknowledgments}

PS is financially supported by the DFG (project STR 1505/2-1) and MGD by `la Caixa' Foundation, grant LCF/BQ/DE16/11570017. We also acknowledge funding from the Spanish MINECO FIS2016-80681-P (AEI-FEDER, UE). 


\bibliography{/home/philipp/Documents/references/books,/home/philipp/Documents/references/open_systems,/home/philipp/Documents/references/thermo,/home/philipp/Documents/references/info_thermo,/home/philipp/Documents/references/general_QM,/home/philipp/Documents/references/math_phys}

\appendix
\section{Comparison with the framework of Smirne \emph{et al.}}
\label{sec app comparison}

In Ref.~\cite{SmirneEtAlQST2018} the notion of ``non-coherence-generating-and-detecting dynamics'' (NCGD dynamics) 
was introduced based on the following definition: 

\begin{mydef}
 The dynamics of an open quantum system is called NCGD with respect to the set of observables $\{R_k\}$ if 
 \begin{equation}\label{eq NCGD}
  \Delta_\ell\Lambda_{\ell,k}\Delta_k\Lambda_{k,j}\Delta_j = \Delta_\ell\Lambda_{\ell,j}\Delta_j
 \end{equation}
 for all $t_\ell\ge t_k\ge t_j\ge t_1$. 
\end{mydef}

In this definition $\Lambda_{\ell,k}$ denotes the `dynamical map' of the quantum system from time $t_k$ to 
time $t_\ell$. For instance, for a time-dependent master equation with Liouvillian $\C L(t)$ this is defined as 
\begin{equation}
 \Lambda_{\ell,k} = \C T_+ \exp\left[\int_{t_k}^{t_\ell} \C L(t) dt\right],
\end{equation}
where $\C T_+$ denotes the time-ordering operator. 

To compare the notions of NCGD and incoherent dynamics, we start by noting that both are almost identical if the 
dynamics are Markovian, invertible and subjected to measurements of a non-degenerate system observable. This is 
important as we are thereby able to confirm the results of Ref.~\cite{SmirneEtAlQST2018} in an independent way. 
To see this, we first prove the following statement: 

\begin{thm}\label{thm incoherent NCGD}
 If the dynamics are Markovian, invertible and incoherent for all possible preparations, then they are also NCGD. 
\end{thm}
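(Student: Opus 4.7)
My plan is to build directly on the superoperator identity derived inside the proof of Theorem~\ref{thm incoherent classical}, and then collapse the left-hand side of the NCGD condition using the composition law that comes with (operational) CP divisibility.

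First I would specialize the incoherence assumption to two times $t_\ell > t_k$: for every preparation $\C A_0$, the $\ell$-incoherence condition gives
\begin{equation*}
\mf T_{\ell+1}[\Delta_\ell,\C I_{\ell-1},\dots,\Delta_k,\dots,\C I_1,\C A_0] = \mf T_{\ell+1}[\Delta_\ell,\C I_{\ell-1},\dots,\C I_k,\dots,\C I_1,\C A_0].
\end{equation*}
Under Markovianity, Eq.~(\ref{eq Markov process tensor}) turns both sides into a concrete product of dynamical maps, and after absorbing all intermediate identities this reads
\begin{equation*}
\Delta_\ell\, \Lambda_{\ell,k}\, \Delta_k\, \Lambda_{k,0}\, \C A_0\rho_S(t_0) = \Delta_\ell\, \Lambda_{\ell,k}\, \Lambda_{k,0}\, \C A_0\rho_S(t_0).
\end{equation*}
Because this must hold for \emph{all} preparations $\C A_0$ and $\Lambda_{k,0}$ is invertible, the states $\Lambda_{k,0}\C A_0\rho_S(t_0)$ span the whole operator space, so I can strip them off and obtain the superoperator identity
\begin{equation*}
\Delta_\ell\, \Lambda_{\ell,k}\, \Delta_k = \Delta_\ell\, \Lambda_{\ell,k}
\end{equation*}
for any pair $t_\ell > t_k$. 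This is really the same identity that appeared in the proof of Theorem~\ref{thm incoherent classical}, just invoked at a general pair of times rather than between consecutive measurements.

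With that in hand, the NCGD equality is almost immediate. For any triple $t_\ell \ge t_k \ge t_j$, I would compute
\begin{equation*}
\Delta_\ell\Lambda_{\ell,k}\Delta_k\Lambda_{k,j}\Delta_j = \Delta_\ell\Lambda_{\ell,k}\Lambda_{k,j}\Delta_j = \Delta_\ell\Lambda_{\ell,j}\Delta_j,
\end{equation*}
where the first step applies the derived identity (reading off $\Delta_\ell\Lambda_{\ell,k}\Delta_k = \Delta_\ell\Lambda_{\ell,k}$) and the second step uses the composition law $\Lambda_{\ell,k}\Lambda_{k,j} = \Lambda_{\ell,j}$ guaranteed by (operational) CP divisibility. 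Since $t_\ell \ge t_k \ge t_j$ were arbitrary, Eq.~(\ref{eq NCGD}) holds and the dynamics are NCGD.

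The only delicate point is the ``stripping off'' step, i.e.\ passing from an identity of states to an identity of superoperators. Strictly speaking I need to make sure that the set $\{\Lambda_{k,0}\C A_0\rho_S(t_0) : \C A_0 \text{ CPTP}\}$ really does span the full operator space the superoperators act on; this is precisely where invertibility of $\Lambda_{k,0}$ and the freedom to choose all preparations come in together. I would comment briefly that, because CPTP maps $\C A_0$ (including those conditioned on auxiliary outcomes) can be used to prepare a spanning set of system states from $\rho_S(t_0)$, invertibility of $\Lambda_{k,0}$ propagates this spanning property to time $t_k$. Apart from this careful bookkeeping, no further ingredients are needed.
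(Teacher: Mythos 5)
Your proof is correct and follows essentially the same route as the paper's: incoherence of a specific pair of intervention configurations, the Markovian factorization of the process tensor, and then stripping off the evolved preparations using invertibility together with the freedom to choose all $\C A_0$. The only cosmetic difference is that you first isolate the two-time identity $\Delta_\ell\Lambda_{\ell,k}\Delta_k=\Delta_\ell\Lambda_{\ell,k}$ and then invoke the composition law, whereas the paper compares configurations that already carry the trailing $\Delta_j$ and reads off Eq.~(\ref{eq NCGD}) in one step.
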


\begin{proof}
 By assumption of incoherence we have for an arbitrary preparation $\C A_0$ and an arbitrary set of times 
 $\{t_\ell,t_k,t_j\}$ with $\ell\ge k\ge j\ge1$ 
 \begin{equation}
  \begin{split}
   & \mf T_{\ell+1}[\Delta_\ell,\dots,\Delta_k,\dots,\Delta_j,\dots,\C A_0] \\
   & = \mf T_{\ell+1}[\Delta_\ell,\dots,\C I_k,\dots,\Delta_j,\dots,\C A_0],
  \end{split}
 \end{equation}
 where the dots denote identity operations. By Markovianity, this means that 
 \begin{equation}
  \Delta_\ell\Lambda_{\ell,k}\Delta_k\Lambda_{k,j}\Delta_j\Lambda_{j,0}\C A_0\rho_0
  = \Delta_\ell\Lambda_{\ell,j}\Delta_j\Lambda_{j,0}\C A_0\rho_0.
 \end{equation}
 Since $\C A_0$ is arbitrary and the dynamics are assumed to be invertible, this implies 
 \begin{equation}
  \Delta_\ell\Lambda_{\ell,k}\Delta_k\Lambda_{k,j}\Delta_j = \Delta_\ell\Lambda_{\ell,j}\Delta_j.
 \end{equation}
 Hence, the dynamics are NCGD. 
\end{proof}

The `converse' of Theorem~\ref{thm incoherent NCGD} reads as follows 

\begin{thm}
 If the dynamics is Markovian and NCGD, the dynamics is incoherent with respect to all preparations that result 
 in a diagonal state (with respect to the observable $R_1$) at time $t_1$. 
\end{thm}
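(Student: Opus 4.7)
The plan is to show that for Markovian dynamics that is NCGD and any preparation $\C A_0$ satisfying the diagonality hypothesis $\Delta_1\Lambda_{1,0}\C A_0\rho_S(t_0)=\Lambda_{1,0}\C A_0\rho_S(t_0)$, every process tensor appearing in the definition of $\ell$-incoherence (Eq.~\ref{eq incoherent}) collapses to the same canonical expression $\Delta_\ell\Lambda_{\ell,1}\Delta_1\Lambda_{1,0}\C A_0\rho_S(t_0)$, independently of the intermediate choices $X_k\in\{\Delta_k,\C I_k\}$. Doing so for all $\ell\in\{1,\dots,n\}$ then establishes incoherence with respect to $\C A_0$.

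First, I would use Markovianity (Eq.~\ref{eq Markov process tensor}) to unfold the process tensor as $\Delta_\ell\Lambda_{\ell,\ell-1}X_{\ell-1}\Lambda_{\ell-1,\ell-2}\dots X_1\Lambda_{1,0}\C A_0\rho_S(t_0)$. Letting $k_1<\dots<k_m$ be the times at which $X_k=\Delta_k$ was chosen, the composition law $\Lambda_{a,b}\Lambda_{b,c}=\Lambda_{a,c}$ absorbs every $X_k=\C I_k$ into the adjacent propagators, producing $\Delta_\ell\Lambda_{\ell,k_m}\Delta_{k_m}\Lambda_{k_m,k_{m-1}}\Delta_{k_{m-1}}\dots\Delta_{k_1}\Lambda_{k_1,0}\C A_0\rho_S(t_0)$. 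Next I would seed a $\Delta_1$ at the innermost slot: if $k_1=1$ it is already present, and if $k_1>1$ I use Markovianity $\Lambda_{k_1,0}=\Lambda_{k_1,1}\Lambda_{1,0}$ together with the diagonality hypothesis to replace $\Lambda_{1,0}\C A_0\rho_S(t_0)$ by $\Delta_1\Lambda_{1,0}\C A_0\rho_S(t_0)$.

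With the $\Delta_1$ seeded at the innermost position, I would iteratively apply NCGD from inside out. The identity $\Delta_b\Lambda_{b,a}\Delta_a\Lambda_{a,j}\Delta_j=\Delta_b\Lambda_{b,j}\Delta_j$ applied with $(b,a,j)=(k_2,k_1,1)$ collapses the innermost five factors and removes $\Delta_{k_1}$. A further application with $(b,a,j)=(k_3,k_2,1)$ removes $\Delta_{k_2}$, and so on; after $m$ steps every intermediate dephasing is gone and the expression has been reduced to $\Delta_\ell\Lambda_{\ell,1}\Delta_1\Lambda_{1,0}\C A_0\rho_S(t_0)$, which is manifestly independent of the pattern of choices $X_k$. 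Hence all process tensors in (\ref{eq incoherent}) coincide.

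The main subtlety is the initialization of this iteration: without the diagonality hypothesis on $\Lambda_{1,0}\C A_0\rho_S(t_0)$, NCGD cannot be applied at the innermost step, since it requires a dephasing $\Delta_j$ both inside and outside the intermediate $\Delta_a$ being removed. The assumption that the state at time $t_1$ is diagonal with respect to $R_1$ is precisely what supplies that inner $\Delta_1$, which is why the theorem only promises incoherence for preparations of this restricted kind rather than for all $\C A_0$. Beyond this, the argument is a routine chaining of NCGD identities and Markovian composition, and no further obstacles arise.
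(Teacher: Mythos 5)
Your argument is correct and follows essentially the same route as the paper's own (much terser) proof: use the diagonality hypothesis to seed a $\Delta_1$ at the innermost slot, note that Markovianity factorizes the process tensor into a chain of $\Lambda$'s sandwiched between $\Delta_\ell$ and $\Delta_1$, and then chain the NCGD identity to strip out (equivalently, insert) every intermediate dephasing, so that all choices collapse to $\Delta_\ell\Lambda_{\ell,1}\Delta_1\Lambda_{1,0}\C A_0\rho_S(t_0)$. Your write-up merely makes explicit the inside-out induction and the $k_1=1$ versus $k_1>1$ bookkeeping that the paper leaves implicit.
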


\begin{proof}
 Since the dynamics is Markovian and the state at time $t_1$ is diagonal, we always have 
 \begin{equation}
  \begin{split}
   & \mf T_{n+1}\left[\Delta_n,\left\{\begin{matrix} \Delta_{n-1} \\ \C I_{n-1} \\ \end{matrix}\right\},\dots, 
   \left\{\begin{matrix} \Delta_1 \\ \C I_1 \\ \end{matrix}\right\},\C A_0\right] \\
   &= \mf T_{n+1}\left[\Delta_n,\left\{\begin{matrix} \Delta_{n-1} \\ \C I_{n-1} \\ \end{matrix}\right\},\dots,\Delta_1,\C A_0\right].
  \end{split}
 \end{equation}
 Hence, the dynamics are `sandwiched' by two dephasing operations at the beginning at time $t_1$ and at the end at 
 time $t_n$. By the property of NCGD, we are allowed to introduce arbitrary dephasing/identity operations at 
 any time step $t_k$, $n>k>1$. Hence, the dynamics are incoherent. 
\end{proof}

This proves that our main results are not in contradiction to Ref.~\cite{SmirneEtAlQST2018}: In there it was shown that 
a Markovian time-homogeneous process -- a subclass of invertible Markov processes, which are described by a 
time-independent Liouvillian $\C L$ -- is classical with respect to measurements of a non-degenerate observable 
\emph{for an initially diagonal state} if and only if the dynamics are NCGD. 

Without the three assumptions of invertibility, Markovianity and non-degeneracy of the measured observable, notable 
differences start to appear. First, our definition of incoherent dynamics remains meaningful even if the dynamics 
are not invertible or if the measured observable is degenerate: in the first case, the dynamical map 
$\Lambda_{\ell,k}$ is not unambiguously defined for $t_k>t_0$ and in the second case, even $\Lambda_{k,0}$ might 
not be defined if the system remains entangled with the environment after an initial dephasing operation. Most 
notably, however, in the non-Markovian regime Eq.~(\ref{eq NCGD}) cannot directly be checked in an 
experiment by comparing two sets of ensembles, one which was dephased in the middle of the evolution and one which 
was not. Indeed, if the dynamics are non-Markovian, then the dynamics after a dephasing operation at time $t_k$ 
will not be described by the map $\Lambda_{\ell,k} = \Lambda_{\ell,0}\Lambda_{k,0}^{-1}$. 
We will exemplify this point by an example, which was also considered in Refs.~\cite{PollockEtAlPRL2018, 
SmirneEtAlQST2018} and experimentally realized in Ref.~\cite{LiuEtAlNatPhys2011}. 

The model describes a spin coupled to 
a continuous degree of freedom via the Hamiltonian $H_{SE} = \frac{g}{2} \sigma_z\otimes\hat q$. The initial state of 
the environment is taken to be pure with a wavefunction in coordinate representation 
$\psi_E(q) = \sqrt{\gamma/\pi}/(q+i\gamma)$. For an initially decorrelated system-environment state the exact 
reduced system dynamics are 
$\rho(t) = \mbox{tr}_E\{e^{-iH_{SE}t}\rho(0)\otimes|\psi\rangle_E\langle\psi|e^{-iH_{SE}t}\}$. Evaluating the trace 
in the coordinate basis and using $e^{i\alpha\sigma_z} = \cos\alpha + i\sin\alpha \sigma_z$, 
it is easy to confirm that 
\begin{equation}
 \begin{split}
  \rho(t) = \int dq  
  & \frac{\gamma/\pi}{q^2+\gamma^2}\left[\cos\left(\frac{gq}{2}t\right) - i\sin\left(\frac{gq}{2}t\right)\sigma_z\right] \\
  & \times \rho(0)\left[\cos\left(\frac{gq}{2}t\right) + i\sin\left(\frac{gq}{2}t\right)\sigma_z\right].
 \end{split}
\end{equation}
Explicit evaluation of the integrals yields 
\begin{equation}\label{eq example 1}
 \rho(t) = \frac{1}{2}(1+e^{-\Gamma t})\rho(0) + \frac{1}{2}(1-e^{-\Gamma t})\sigma_z\rho(0)\sigma_z,
\end{equation}
where we have introduced the dephasing rate $\Gamma \equiv g\gamma$. Next, we take Eq.~(\ref{eq example 1}), substract 
$\sigma_z$(\ref{eq example 1})$\sigma_z$ and multiply by $e^{-\Gamma t}/2$ to confirm that 
\begin{equation}
 \frac{e^{-\Gamma t}}{2}[\rho(0) - \sigma_z\rho(0)\sigma_z] = \frac{1}{2}\rho(t) - \frac{1}{2}\sigma_z\rho(t)\sigma_z.
\end{equation}
This allows us to deduce a master equation for the two-level system by taking the time-derivative of 
Eq.~(\ref{eq example 1}) and by using the previous result: 
\begin{equation}
 \begin{split}
  \partial_t\rho(t) &=  -\Gamma \frac{e^{-\Gamma t}}{2}[\rho(0) - \sigma_z\rho(0)\sigma_z]  \\
                    &=  \frac{\Gamma}{2}[\sigma_z\rho(t)\sigma_z-\rho(t)]  \\
                    &=  \frac{\Gamma}{2}\left[\sigma_z\rho(t)\sigma_z-\frac{1}{2}\{\sigma_z^2,\rho(t)\}\right] 
                    =\equiv\C L\rho(t),
 \end{split}
\end{equation}
where $\C L$ denotes the `Liouvillian'. This is a very simple master equation where the expectation values of the Pauli 
matrices $[x(t),y(t),z(t)] = [\lr{\sigma_x}(t),\lr{\sigma_y}(t),\lr{\sigma_z}(t)]$ obey the differential equations 
\begin{equation}
 \dot x(t) = -\Gamma x(t), ~~ \dot y(t) = -\Gamma y(t), ~~ \dot z(t) = 0.
\end{equation}
The solution of these equations is obvious. 

Next, let us apply a dephasing operation in the $\sigma_x$ basis at an arbitrary time $s>0$, which is defined for any 
$\rho$ as 
\begin{equation}
 \Delta\rho = |+\rl+|\rho|+\rl+| + |-\rl-|\rho|-\rl-|,
\end{equation}
where $|\pm\rangle = (|1\rangle\pm|0\rangle)/\sqrt{2}$. 
Note that for a density matrix parametrized by a Bloch vector $(x,y,z)$ in the $\sigma_z$ basis we obtain 
\begin{equation}\label{eq dephasing op qubit}
 \Delta\frac{1}{2}\begin{pmatrix}
        1+z     &   x-iy    \\
        x+iy    &   1-z \\
       \end{pmatrix} = \frac{1}{2}
       \begin{pmatrix}
        1   &   x   \\
        x   &   1   \\
       \end{pmatrix}.
\end{equation}
We now want to compute the exact system state at time $t>s$ after a dephasing operation was applied, i.e., 
\begin{equation}
 \rho(t) = \mbox{tr}\{\C U_{t,t-s}\Delta\C U_{s,0}\rho(0)\otimes|\psi\rangle_E\langle\psi|\}.
\end{equation}
By evaluating the trace again in the coordinate representation, this can be done straightforwardly although the 
calculation becomes now more tedious. The result for an initial state 
with expectation value $\lr{\sigma_x}(0) = x_0$ [the other expectation values do not matter as they get erased in 
the dephasing operation, cf.~Eq.~(\ref{eq dephasing op qubit})] is 
\begin{equation}
 \begin{split}\label{eq exact x}
  x(t)  =&~  \frac{x_0}{2} \left\{\cosh[\Gamma(t-2s)] + \cosh(\Gamma t)\right\}  \\
        &-  \frac{x_0}{2} \left\{\sinh(\Gamma t) + \frac{\sinh[\Gamma(t-2s)]}{\text{sign}(t-2s)}\right\}.  
 \end{split}
\end{equation}

\begin{figure}[h]
 \centering\includegraphics[width=0.35\textwidth,clip=true]{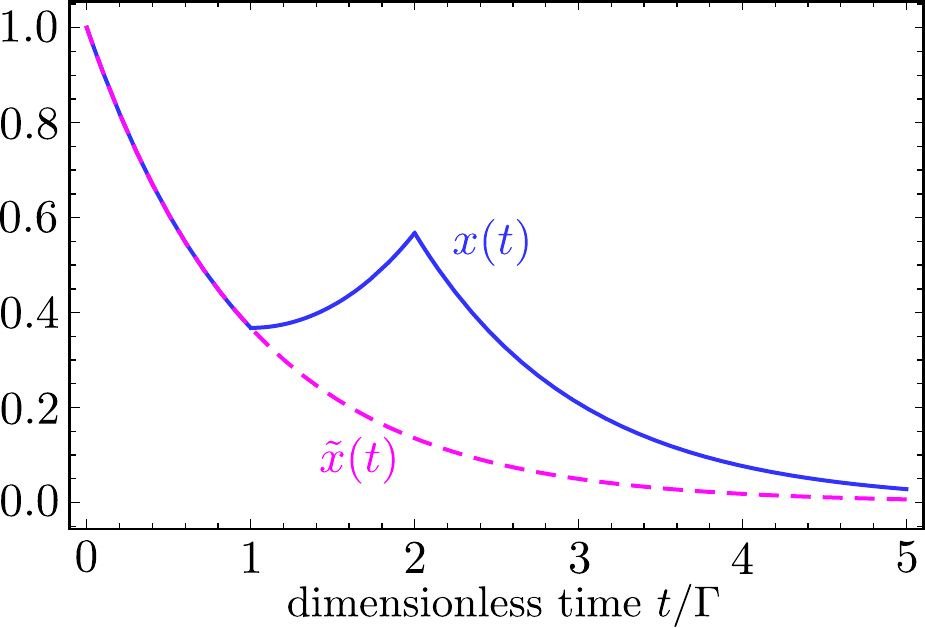}
 \label{fig plot example} 
 \caption{Plot of the exact time evolution [Eq.~(\ref{eq exact x}), solid blue line] compared with the approximated 
 one [Eq.~(\ref{eq approx}), dashed pink line]. Parameters are $\Gamma = 1, s = 1$ and $x_0 = 1$. }
\end{figure}

Now, for time-homogeneous dynamics the definition of NCGD in Ref.~\cite{SmirneEtAlQST2018} reduces to 
\begin{equation}\label{eq NCGD def Smirne}
 \Delta e^{\C L(t-s)}\Delta e^{\C Ls}\Delta = \Delta e^{\C L t}\Delta
\end{equation}
for all $t>s>0$. For our example we get according to the dynamics in Eq.~(\ref{eq NCGD def Smirne}) 
\begin{equation}\label{eq approx}
 \tilde x(t) = \mbox{tr}\{\sigma_x e^{\C L(t-s)}\Delta e^{\C Ls}\Delta \rho(0)\} = e^{-\Gamma t} x_0
\end{equation}
for all $t$ and especially independent of any dephasing operation. Hence, the dynamics is NCGD according to 
the definition from Ref.~\cite{SmirneEtAlQST2018}. But by looking at the exact time-evolution of the system 
[cf.~Eq.~(\ref{eq exact x}) and Fig.~\ref{fig plot example}], we see that even the mean value $x(t)$ can show a 
strong dependence on the dephasing operation. Therefore, according to our definition, the dynamics are 
\emph{not} incoherent with respect to the $\sigma_x$ basis. 

Finally, we mention that there are a couple of finer details too. For instance, in our work we only consider a fixed 
set of discrete times whereas Smirne \emph{et al.}~allow for arbitrary times. On the other hand, the system observable 
$R_k$ was not allowed to be explicitly time-dependent in Ref.~\cite{SmirneEtAlQST2018}. These points can be, however, 
incorporated in each of the frameworks and therefore we did not put any additional emphasis on these minor details. 

\section{Counterexamples}
\label{sec app counterexamples}

\subsubsection*{A process which is incoherent for one preparation $\C A_0$ but not classical for that preparation}

Consider an isolated two-level system undergoing purely unitary dynamics. Then, the dynamics are incoherent with respect 
to any preparation $\C A_0$ which maps the system state to a completely mixed state: independent of any dephasing or 
identity operation, the state will stay at the origin of the Bloch ball for all times. 
 
However, such a dynamics does not necessarily imply classical statistics. Consider, e.g., the measurement basis to be 
$\sigma_z$ (with outcomes $\{\uparrow_k,\downarrow_k\}$ at times $t_k$) and the unitary rotations to be around the 
$y$-axis. Furthermore, the time-steps are chosen equidistant in such a way that the rotation is exactly $\pi/2$. 
It is then easy to confirm that 
\begin{equation}
 p(\uparrow_3,\uparrow_2,\uparrow_1) = p(\uparrow_3,\downarrow_2,\uparrow_1) = \frac{1}{8},
\end{equation}
hence, $\sum_{\sigma_2\in\{\uparrow,\downarrow\}} p(\uparrow_3,\sigma_2,\uparrow_1) = 1/4$. But if we do not 
perform any measurement at time $t_2$, we obtain ${p(\uparrow_3,\cancel{\sigma_2},\uparrow_1) = 0}$. 
The statistics are therefore non-classical. 

\subsubsection*{A process which is Markovian and incoherent for all preparations but not classical}

Consider a Markov process for a two-level system where the map in the first time-step is defined by 
\begin{equation}
 \Lambda_{2,1}: \rho\mapsto\frac{1}{2}\begin{pmatrix}
                                        1 & 0 \\ 0 & 1 \\
                                      \end{pmatrix}
\end{equation}
for any input state $\rho$. The rest of the dynamics is again unitary as in the previous counterexample. 
Thus, the dynamics are incoherent for any preparation, but not classical. 

\subsubsection*{A process which is Markovian, invertible and classical for all preparations but not incoherent with 
respect to measurements of a degenerate observable}

Consider two qubits $A$ and $B$ and projective measurements in some fixed basis of qubit $A$ only such that the 
dephasing operation acts only locally on qubit $A$: $\Delta = \Delta_A\otimes\C I_B$. Thus, the measured observable 
is degenerate and projects onto two possible subspaces of dimension two. Furthermore, we only consider measurements at 
two times $t_2$ and $t_1$ and assume the dynamics in between these two times to be described by a unitary swap gate, 
$U_\text{swap}|i_Aj_B\rangle = |j_Bi_A\rangle$. We also assume that the dynamics in between the preparation 
and the first measurement is trivial, i.e., described by an identity operation. 

Now, consider an arbitrary initial state resulting from an arbitrary preparation $\C A_0$, denoted as 
\begin{equation}
 \C A_0\rho_0 = \sum_{i_A,i_B,j_A,j_B} \rho_{i_Ai_B,j_Aj_B} |i_Ai_B\rl j_Aj_B|.
\end{equation}
Then, straightforward calculation reveals that 
\begin{align}
 p(r_2,r_1)   &=  \mbox{tr}_{AB}\{\C P_{r_2}\C U_\text{swap}\C P_{r_1}\C A_0\rho_0\} = \rho_{r_1r_2,r_1r_2}, \\
 p(r_2,\cancel{r_1})   &=  \mbox{tr}_{AB}\{\C P_{r_2}\C U_\text{swap}\C A_0\rho_0\} = \sum_j\rho_{jr_2,jr_2}.              
\end{align}
Hence, the process is classical: $\sum_{r_1} p(r_2,r_1) = p(r_2,\cancel{r_1})$. 
 
However, the process is not incoherent. Consider, for instance, the initial state 
\begin{equation}
 \C A_0\rho_0 = |\psi_0\rl\psi_0|, ~~~ |\psi_0\rangle = \frac{|0_A\rangle+|1_A\rangle}{\sqrt{2}}\otimes|0_B\rangle.
\end{equation}
Then, 
\begin{equation}
 \mf T_3[\Delta_2,\C I_1,\C A_0] = |\psi_1\rl\psi_1|, ~~~ |\psi_1\rangle = |0_A\rangle\otimes\frac{|0_B\rangle+|1_B\rangle}{\sqrt{2}},
\end{equation}
but 
\begin{equation}
 \mf T_3[\Delta_2,\Delta_1,\C A_0] = \frac{|0_A0_B\rl0_A0_B| + |0_A1_B\rl0_A1_B|}{2}.
\end{equation}

\end{document}